\documentclass[american]{article}
\usepackage[T1]{fontenc}
\usepackage[latin9]{inputenc}
\usepackage{amsmath}
\usepackage{amssymb}
\usepackage{amsthm}
\usepackage{epsfig}
\usepackage{amssymb}
\usepackage{setspace}
\usepackage{fullpage}

\usepackage{babel}

\newtheorem{Theorem}{Theorem}
\newtheorem{Lemma}{Lemma}
\newtheorem{proposition}{Proposition}

\newtheorem{defn}{Definition}

\begin{document}

\title{Existence of Stable Exclusive Bilateral Exchanges in Networks}

\author{Ankur Mani, Asuman Ozdaglar, Alex (Sandy) Pentland\\
amani@mit.edu, asuman@mit.edu, pentland@mit.edu}

\date{}

\maketitle

\abstract

In this paper we show that when individuals in a bipartite network exclusively choose partners and exchange valued goods with their partners, then there exists a set of exchanges that are pair-wise stable. Pair-wise stability implies that no individual breaks her partnership and no two neighbors in the network can form a new partnership while breaking other partnerships if any so that at least one of them improves her payoff and the other one does at least as good. We consider a general class of continuous, strictly convex and strongly monotone preferences over bundles of goods for individuals. Thus, this work extends the general equilibrium framework from markets to networks with exclusive exchanges. We present the complete existence proof using the existence of a generalized stable matching in \cite{Generalized-Stable-Matching}. The existence proof can be extended to problems in social games as in \cite{Matching-Equilibrium} and \cite{Social-Games}.

\large
\doublespacing


\section{Introduction}

In this paper we show that when individuals in a bipartite network exclusively choose partners and exchange valued goods with their partners, then there exists a set of exchanges that are pair-wise stable. Pair-wise stability implies that no individual breaks her partnership and no two neighbors in the network can form a new partnership while breaking other partnerships if any so that at least one of them improves her payoff and the other one does at least as good. We consider a general class of continuous, strictly convex and strongly monotone preferences over bundles of goods for individuals. Thus, this work extends the general equilibrium framework from markets to networks with exclusive exchanges. However, unlike the general equilibrium, the strategy set of individuals is not a continuous demand but is hybrid, since along with demand, the individuals also choose the partners. A simplified version of this problem has been studied in the context of the assignment game \cite{Shapley-Shubik-1972} where the authors only consider indivisible goods. The existence of a stable set of exchanges is shown through the existence of a feasible solution for the associated relaxed linear program. However, the same methodology cannot be extended to the divisible goods case unless restrictive assumptions are made about the preferences. A similar problem has been studied in \cite{Matching-Equilibrium}. The authors study a similar concept called Matching Equilibrium in Bipartite social games and show its existence by the convergence of the Gale-Shapley deferred acceptance algorithm \cite{Gale-Shapley}. However, even in their case the Gale-Shapley deferred acceptance algorithm cannot be applied to the case where the set of Nash Equilibria for any pair of individuals is uncountable. We present the complete existence proof using the existence of a generalized stable matching in \cite{Generalized-Stable-Matching}. The existence proof can be extended to problems in social games as in \cite{Matching-Equilibrium} and \cite{Social-Games}.

The organization of the rest of the paper is as such. In section 2, we present the model of bilateral exchanges in networks and characterize the properties of such exchanges. In section 3, we introduce the network exchange game. In section 4 we present the solution concept of pair-wise stability and in section 5 we prove the existence of a pair-wise stable strategy profile in the network exchange game.


\section{Model}

In this section we introduce the model of network exchange. We first introduce the network, bilateral exchanges and pareto-efficient exchanges and payoffs and then characterize the sets of exchanges and payoffs. and the relations between them.


\subsection{Network}

A {\bf social network} is a weighted undirected graph $S=(N,E,W)$, where $N$ is the set of actors, $E$ is the set of links between actors ($E\subseteq N\times N$) and $W:E\rightarrow\mathbf{R}_{+}$ is a function over the links and represents the capacity of the link. The set of neighbors of a node $i \in N$ is $Nbr\left(i\right) = \{j \in N : \left(i,j\right) \in E\}$.

We define $M=\left\{ X,Y\right\} $ as the set of types of items of value that can be exchanged between the actors. Each actor $i\in N$ has exactly one type of item $M_{i}\in M$. The amount of item $M_{i}$ with actor $i$ is $|M_{i}|$. Each actor $i\in N$ has {\bf continuous, strictly convex and strongly monotone preferences} $\preceq_i$ over bundles of items. Alternatively, each actor $i$ has a {\bf continuous, strictly quasi-concave and strongly monotone utility function} $\pi_{i}:\mathbf{R}_{+}^{2} \rightarrow \mathbf{R}_{+}$.

The set of actors can be divided into two disjoint sets based upon the item they have. Let $A=\left\{ i\in N:M_{i}=X\right\} $ and $B=\left\{ j\in N:M_{j}=Y\right\} $. Then an actor in $A$ can only perform a rationally feasible exchange with another actor in $B$. Thus without loss of generality, we restrict our attention to bipartite social networks in which $E\in A\times B$. We will call the actors in $A$, the buyers and the actors in $B$, the sellers.


\subsection{Bilateral Exchanges}

An {\bf exchange} between an actor $i$ and $j$ with $\left(i,j\right) \in E$, is a tuple $(m_i,m_j) \in [0,|M_i|] \times [0,|M_j|]$ and involves $i$ giving amount $m_{i}$ units of item $M_{i}$ to actor $j$ and $j$ giving some amount $m_{j}$ units of item $M_{j}$ to actor $i$.

Each link $(i,j)\in E$ is an {\bf exchange opportunity} of capacity $W\left(i,j\right)$ between $i$ and $j$. The capacity of an exchange opportunity $\left(i,j\right)$ is the maximum allowed amount of items that can be exchanged between the actors $i$ and $j$, i.e.- for any exchange $\left(m_i,m_j\right)$ between $i$, and $j$, $m_i+m_j \leq W\left(i,j\right)$. The set of possible exchanges $EX\left(i,j\right)$ between neighboring actors $i,j$ is a polytope specified by the following inequalities.
\begin{align}
&0 \leq m_i \leq |M_i|\\
&0 \leq m_j \leq |M_j|\\
&m_i+m_j \leq W\left(i,j\right)
\end{align}
The {\bf payoff function} for actor $i$ in an exchange opportunity $\left(i,j\right)$ is $V_{ij} ~:~ EX\left(i,j\right) \rightarrow \mathbf{R}$ and the {\bf payoff function} for actor $j$ is $V_{ji} ~:~ EX\left(i,j\right) \rightarrow \mathbf{R}$. In an exchange $\left(m_i,m_j\right)$ between $i$ and $j$, the {\bf payoff} of actor $i$ is $V_{ij}\left(m_{i},m_{j}\right)=\pi_{i}\left(|M_{i}|-m_{i},m_{j}\right)-\pi_{i}\left(|M_{i}|,0\right)$ and the {\bf payoff} of actor $j$ is $V_{ji}\left(m_{i},m_{j}\right)=\pi_{j}\left(m_{i},|M_j|-m_{j}\right)-\pi_{j}\left(0, |M_{j}|\right)$. From the assumptions on $\pi_i$ and $\pi_j$, the functions $V_{ij}$ and $V_{ji}$ are continuous, strictly quasi-concave and strongly monotone. $V_{ij}$ is increasing in $m_j$ and decreasing in $m_i$ and $V_{ji}$ is increasing in $m_i$ and decreasing in $m_j$.
The set $RX\left(i,j\right) \subset EX\left(i,j\right)$ of  {\bf rationally feasible} exchanges between actors $i$ and $j$ is the convex and compact set specified by the following inequalities.
\begin{align}
&0 \leq m_i \leq |M_i|\\
&0 \leq m_j \leq |M_j|\\
&m_i+m_j \leq W\left(i,j\right)\\
&V_{ij}\left(m_{i},m_{j}\right)\geq0\\
&V_{ji}\left(m_{j},m_{i}\right)\geq0
\end{align}
The convexity comes from the fact that $V_{ij}$ and $V_{ji}$ are strictly quasi-concave and $EX\left(i,j\right)$ is a polytope.

A {\bf payoff vector function} $\mathbf{V}_{i,j} ~:~ EX\left(i,j\right) \rightarrow \mathbf{R}^2$ maps the exchanges to the payoffs of the actors.
The {\bf payoff vector} for exchange $\left(m_i,m_j\right)$ in the exchange opportunity $\left(i,j\right)$ is a tuple $\mathbf{V}_{i,j}\left(m_{i},m_{j}\right) = \left(V_{ij}\left(m_{i},m_{j}\right),V_{ji}\left(m_{j},m_{i}\right)\right)$.
From the continuity of $V_{ij}$ and $V_{ji}$, $\mathbf{V}_{i,j}$ is continuous.

The set of {\bf possible payoff vectors} $XV\left(i,j\right)$ is the image of $EX\left(i,j\right)$ under the function $\mathbf{V}_{i,j}$.
The set of {\bf rationally feasible payoff vectors} $RV\left(i,j\right)$ is the image of $RX\left(i,j\right)$ under the function $\mathbf{V}_{i,j}$.

The relationship between the sets of exchanges and the set of payoff vectors is captured in figure \ref{fig:Exchange-Payoff-Sets}.

\begin{figure} \label{fig:Exchange-Payoff-Sets}
\centering
\includegraphics[scale = 0.3]{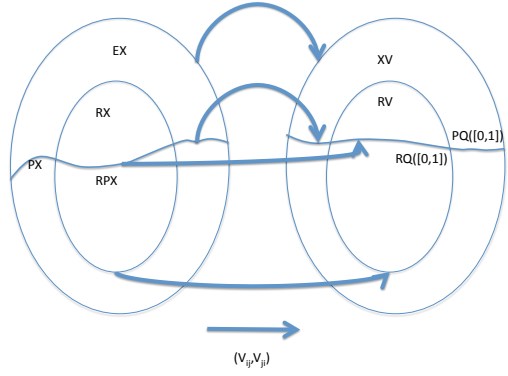}
\caption{The possible exchange set ($EX(i,j)$) includes rationally feasible exchange set ($RX(i,j)$) and the pareto-efficient exchange set ($PX(i,j)$) includes the rationally feasible pareto-efficient exchange set ($RPX(i,j)$). The payoff-vector function $\mathbf{V}_{i,j}$ maps these exchange sets to the set of payoff-vectors. The possible exchange set $EX(i,j)$ is mapped onto the set of possible payoff vectors $XV(i,j)$. The rationally feasible exchange set $RX(i,j)$ is mapped onto the set of rationally feasible payoff vectors $RV(i,j)$. The pareto-efficient exchange set $PX(i,j)$ is mapped onto the set of pareto-efficient payoff vectors ${PQ}^(i,j)([0,1])$. The rationally feasible pareto-efficient exchange set $RPX(i,j)$ is mapped onto the set of rationally feasible pareto-efficient payoff vectors ${RQ}^(i,j)([0,1])$. $XV(i,j)$ includes $RV(i,j)$ and ${PQ}^(i,j)([0,1])$ includes ${RQ}^(i,j)([0,1])$. $\mathbf{PQ}^(i,j)$ is a path in $XV(i,j)$ and ${RQ}^(i,j)$ is a path in $RV(i,j)$.}
\end{figure}

We now characterize the set of possible payoff vectors and the set of rationally feasible payoff vectors in the following proposition.

\begin{proposition}
The set of possible payoff vectors $XV\left(i,j\right)$ and the set of rationally feasible payoff vectors $RV\left(i,j\right)$ in the exchange opportunity $\left(i,j\right)$ are connected and compact.
\end{proposition}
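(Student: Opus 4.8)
The plan is to reduce both assertions to two standard topological facts: the continuous image of a compact set is compact, and the continuous image of a connected set is connected. Since $XV\left(i,j\right)$ and $RV\left(i,j\right)$ are by definition the images of $EX\left(i,j\right)$ and $RX\left(i,j\right)$ respectively under the map $\mathbf{V}_{i,j}$, and since we already know $\mathbf{V}_{i,j}$ is continuous, it suffices to verify that the two domains $EX\left(i,j\right)$ and $RX\left(i,j\right)$ are themselves compact and connected. Both properties then transfer automatically to the images.

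First I would establish the claim for $EX\left(i,j\right)$. This set is the polytope in $\mathbf{R}^2$ carved out by the three non-strict linear inequalities in its definition. Being the intersection of finitely many closed half-planes it is closed, and being contained in the box $[0,|M_i|]\times[0,|M_j|]$ it is bounded; hence by Heine--Borel it is compact. Moreover any polytope is convex, and every convex subset of $\mathbf{R}^2$ is path-connected and therefore connected. Applying the two preservation theorems to the continuous map $\mathbf{V}_{i,j}$ then yields at once that $XV\left(i,j\right)$ is both compact and connected.

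Next I would treat $RX\left(i,j\right)$ in the same way. The excerpt already records that $RX\left(i,j\right)$ is convex and compact, the convexity following from the strict quasi-concavity of $V_{ij}$ and $V_{ji}$ together with the fact that $EX\left(i,j\right)$ is a polytope; I would take this as given. Convexity again gives connectedness, so $RX\left(i,j\right)$ is compact and connected, and pushing forward through the continuous map $\mathbf{V}_{i,j}$ shows that $RV\left(i,j\right)$ is compact and connected as well.

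I do not expect a genuine obstacle here, as the proposition is essentially a packaging of elementary point-set topology. The only point needing any care is the compactness and connectedness of the domains, and in particular the convexity of $RX\left(i,j\right)$, which rests on the quasi-concavity assumptions on the payoff functions; since that convexity is already asserted in the excerpt, the remaining work is purely a matter of invoking the image-of-compact and image-of-connected theorems.
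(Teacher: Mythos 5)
Your argument is correct and is essentially identical to the paper's own proof: both establish that $EX\left(i,j\right)$ and $RX\left(i,j\right)$ are convex (hence connected) and compact, and then push these properties forward through the continuous map $\mathbf{V}_{i,j}$. Your version merely spells out the Heine--Borel step for $EX\left(i,j\right)$ a bit more explicitly.
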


\begin{proof}

We know that $EX\left(i,j\right)$ and $RX\left(i,j\right)$ are convex and hence connected. Since $\mathbf{V}_{i,j}$ is continuous, therefore by the intermediate value theorem \cite{Real-Analysis} for connected spaces, the images of $EX\left(i,j\right)$ and $RX\left(i,j\right)$ under $\mathbf{V}_{i,j}$ are connected. Hence, $XV\left(i,j\right)$ and $RV\left(i,j\right)$ are connected. Also, since $EX\left(i,j\right)$ and $RX\left(i,j\right)$ are compact, their images under the continuous function $\mathbf{V}_{i,j}$ are compact. Hence, $XV\left(i,j\right)$ and $RV\left(i,j\right)$ are compact.

\end{proof}

From the strong monotonicity property of $V_{ij}$ and $V_{ji}$,the maximum payoff for $i$ over all exchanges in $EX\left(i,j\right)$ is $V^*_i\left(EX\left(i,j\right)\right) = V_{ij}\left(0,\min\{|M_j|, W(i,j)\}\right)$ and the maximum payoff for $j$ over all exchanges in $EX\left(i,j\right)$ is $V^*_j\left(EX\left(i,j\right)\right) = V_{ji}\left(\min\{|M_i|, W(i,j)\},0\right)$. The minimum payoff for $i$ over all exchanges in $EX\left(i,j\right)$ is $V^\circ_i\left(EX\left(i,j\right)\right) = V_{ij}\left(\min\{|M_i|, W(i,j)\},0\right)$ and the maximum payoff for $j$ over all exchanges in $EX\left(i,j\right)$ is\\
$V^\circ_j\left(EX\left(i,j\right)\right) = V_{ji}\left(0,\min\{|M_j|, W(i,j)\}\right)$.

We will represent the maximum rationally feasible payoff for $i$ over $RX\left(i,j\right)$ as $V^*_i\left(RX\left(i,j\right)\right)$ and the maximum rationally feasible payoff for $j$ over $RX\left(i,j\right)$ is $V^*_j\left(RX\left(i,j\right)\right)$. The minimum rationally feasible payoff for $i$ over $RX\left(i,j\right)$ is $V^\circ_i\left(RX\left(i,j\right)\right)$ and the maximum rationally feasible payoff for $j$ over $RX\left(i,j\right)$ is $V^\circ_j\left(RX\left(i,j\right)\right)$.

From the intermediate value theorem, we know that for any\\
$x \in [V^\circ_i\left(EX\left(i,j\right)\right), V^*_i\left(EX\left(i,j\right)\right)]$, there exists an exchange $\left(m_i,m_j\right) \in EX(i,j)$ for which the payoff for $i$ is $V_{ij}\left(m_i,m_j\right) = x$. Similarly, for any $x \in [V^\circ_j\left(EX\left(i,j\right)\right), V^*_j\left(EX\left(i,j\right)\right)]$, there exists an exchange $\left(m_i,m_j\right) \in EX(i,j)$ for which the payoff for $j$ is $V_{ji}\left(m_i,m_j\right) = x$. Therefore the set of possible payoffs for $i$ is $[V^\circ_i\left(EX\left(i,j\right)\right), V^*_i\left(EX\left(i,j\right)\right)]$ and the set of possible payoffs for $j$ is $[V^\circ_j\left(EX\left(i,j\right)\right), V^*_j\left(EX\left(i,j\right)\right)]$.


\subsection{Pareto-efficient exchanges}

The set of {\bf pareto-efficient exchanges} or {\bf pareto-set} between two actors $i,j$ is
\begin{align}
&PX\left(i,j\right) = \{\left(m_i,m_j\right) \in EX\left(i,j\right) ~:~ \nexists \left(m^{'}_i,m^{'}_j\right) \in EX\left(i,j\right) \mbox{, with } \mathbf{V}_{i,j}\left(m^{'}_i,m^{'}_j\right)>\mathbf{V}_{i,j}\left(m^{'}_i,m^{'}_j\right)\}
\end{align}
The set of pareto-efficient payoff vectors is the image of the pareto-set under the payoff-vector function $\mathbf{V}_{i,j}$.

For any $x \in [V^\circ_i\left(EX\left(i,j\right)\right), V^*_i\left(EX\left(i,j\right)\right)]$ and $y \in [V^\circ_j\left(EX\left(i,j\right)\right), V^*_j\left(EX\left(i,j\right)\right)]$, we define the {\bf upper-level sets} $U_i(x) = \{\left(m_{i},m_{j}\right) \in EX\left(i,j\right) ~:~ V_{ij}\left(m_{i},m_{j}\right) \geq x\}$ and $U_j(y) = \{\left(m_{i},m_{j}\right) ~:~ V_{ji}\left(m_{i},m_{j}\right) \geq y\}$. 

Given, $q_{i} \in [V^\circ_i\left(EX\left(i,j\right)\right), V^*_i\left(EX\left(i,j\right)\right)]$ a pareto-efficient exchange $\left(m^{p_i}_i\left(q_i\right),m^{p_i}_j\left(q_i\right)\right)$ between actors $i$ and $j$ that gives a payoff of at least $q_i$ to actor $i$ is characterized by
\begin{align} \label{eqn:char-pareto-efficient}
&\left(m^{p_i}_{i}\left(q_{i}\right),m^{p_i}_{j}\left(q_{i}\right)\right) \in \underset{\left(m_{i},m_{j}\right) \in U_i\left(q_i\right)}{\arg\max}V_{ji}\left(m_{j},m_{i}\right)
\end{align}
Clearly, there exists at least one such exchange because $q_{i} \in [V^\circ_i\left(EX\left(i,j\right)\right), V^*_i\left(EX\left(i,j\right)\right)]$.

\begin{proposition} \label{prop:uniqueness-monotonicity-pareto}

The characteristics of the pareto-efficient exchanges that satisfy relation \ref{eqn:char-pareto-efficient} for some $q_{i} \in [V^\circ_i\left(EX\left(i,j\right)\right), V^*_i\left(EX\left(i,j\right)\right)]$ are:

\begin{enumerate}

\item For each $q_{i} \in [V^\circ_i\left(EX\left(i,j\right)\right), V^*_i\left(EX\left(i,j\right)\right)]$, there is a unique pareto-efficient exchange $\left(m^{p_i}_{i}\left(q_{i}\right),m^{p_i}_{j}\left(q_{i}\right)\right)$ that satisfies relation \ref{eqn:char-pareto-efficient}.

\item The payoff $V_{ij}\left(m^{p_i}_{i}\left(q_{i}\right),m^{p_i}_{j}\left(q_{i}\right)\right) = q_{i}$.

\item The payoff $V_{ji}\left(m^{p_i}_{i}\left(q_{i}\right),m^{p_i}_{j}\left(q_{i}\right)\right)$ is strictly decreasing in $q_i$.

\end{enumerate}

\end{proposition}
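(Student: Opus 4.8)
The plan is to prove the three claims in order, isolating a single perturbation argument that does most of the work. Throughout I would use that $U_i(q_i)$ is convex and compact: it is the intersection of the polytope $EX(i,j)$ with the upper-level set $\{V_{ij} \geq q_i\}$ of the quasi-concave function $V_{ij}$, and it is a closed subset of the compact set $EX(i,j)$. Since $q_i \in [V^\circ_i(EX(i,j)), V^*_i(EX(i,j))]$ this set is nonempty, so the continuous $V_{ji}$ attains its maximum on it and the $\arg\max$ in \ref{eqn:char-pareto-efficient} is well defined. I will write $p^{*}(q_i) = (m^{p_i}_i(q_i), m^{p_i}_j(q_i))$ for a maximizer at level $q_i$.

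For claim 1 (uniqueness) I would argue directly from strict quasi-concavity of $V_{ji}$. If two distinct points $p_1 \neq p_2$ both maximized $V_{ji}$ over the convex set $U_i(q_i)$, then their midpoint also lies in $U_i(q_i)$ and, by strict quasi-concavity, yields a value strictly larger than $\min\{V_{ji}(p_1), V_{ji}(p_2)\}$, which equals the claimed maximum — a contradiction. Hence the maximizer is unique.

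The technical heart is claim 2 (the constraint binds), and I expect this to be the main obstacle. Suppose for contradiction that $V_{ij}(p^{*}(q_i)) > q_i$. Using strong monotonicity ($V_{ji}$ increasing in $m_i$, decreasing in $m_j$), the global maximizer of $V_{ji}$ over $EX(i,j)$ is the corner $g = (\min\{|M_i|, W(i,j)\}, 0)$, at which $V_{ij}(g) = V^\circ_i(EX(i,j)) \leq q_i < V_{ij}(p^{*}(q_i))$; in particular $g \neq p^{*}(q_i)$. Moving from $p^{*}(q_i)$ toward $g$ along the segment inside the convex set $EX(i,j)$, strict quasi-concavity of $V_{ji}$ forces $V_{ji}$ to exceed $V_{ji}(p^{*}(q_i))$ at every interior point of the segment, while continuity of $V_{ij}$ together with $V_{ij}(p^{*}(q_i)) > q_i$ guarantees that a short enough step stays in $\{V_{ij} \geq q_i\}$, hence in $U_i(q_i)$. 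This produces a feasible point strictly better for $j$, contradicting optimality of $p^{*}(q_i)$; therefore $V_{ij}(p^{*}(q_i)) = q_i$. The delicate points are checking that the perturbation remains feasible and that $g \neq p^{*}(q_i)$, both of which reduce to the strict inequality $q_i < V_{ij}(p^{*}(q_i))$ and the location of $g$.

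Finally, for claim 3 (strict monotonicity) I would combine claim 2 with the same perturbation idea. For $q_i < q_i'$ in the feasible range, nestedness $U_i(q_i') \subseteq U_i(q_i)$ immediately gives $V_{ji}(p^{*}(q_i')) \leq V_{ji}(p^{*}(q_i))$. To upgrade this to a strict inequality, note by claim 2 that $V_{ij}(p^{*}(q_i')) = q_i' > q_i \geq V^\circ_i(EX(i,j))$, so $p^{*}(q_i')$ lies strictly inside the constraint $\{V_{ij} \geq q_i\}$ and is distinct from the global maximizer $g$; the perturbation argument then yields a point of $U_i(q_i)$ with $V_{ji}$ strictly above $V_{ji}(p^{*}(q_i'))$. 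Hence $V_{ji}(p^{*}(q_i)) > V_{ji}(p^{*}(q_i'))$, establishing that the seller's payoff is strictly decreasing in $q_i$.
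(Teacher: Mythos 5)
Your proof is correct, and it reaches the same conclusions by a perturbation-toward-improvement argument that is in spirit the same as the paper's, but the execution differs in two places worth noting. For claim 2 the paper splits into three cases according to whether $m^{p_i}_j(q_i)>0$ and whether $m^{p_i}_i(q_i)$ is at its upper bound, perturbing one coordinate at a time and invoking strong monotonicity of $V_{ji}$ in each coordinate separately; you instead identify the corner $g=\left(\min\{|M_i|,W(i,j)\},0\right)$ as the global maximizer of $V_{ji}$ over $EX(i,j)$ and move along the single segment from the putative optimum toward $g$, using strict quasi-concavity to get strict improvement and continuity of $V_{ij}$ to stay in $U_i(q_i)$. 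This collapses the paper's Cases 1--3 into one step (the paper's Case 3 becomes your observation that $g\neq p^*(q_i)$), at the cost of having to justify that $g$ is the global maximizer --- which the paper has already recorded as $V^*_j(EX(i,j))=V_{ji}(\min\{|M_i|,W(i,j)\},0)$, so nothing is lost. For claim 3 your argument is actually more complete than the paper's: the paper asserts strictness in one line ``since $(m^{p_i}_i(q''_i),m^{p_i}_j(q''_i))$ is pareto-efficient,'' which tacitly presupposes that the constrained maximizer is pareto-efficient (a fact that itself needs claims 1 and 2 and is never verified), whereas you derive the strict inequality directly from nestedness of the upper-level sets plus a re-run of the claim-2 perturbation at the point $p^*(q_i')$, which by claim 2 sits strictly inside $\{V_{ij}\geq q_i\}$. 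Both routes are valid; yours is the more self-contained of the two.
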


\begin{proof}

Firstly, we notice that since the payoff functions $V_{ij}$ and $V_{ji}$ are strictly quasi-concave and continuous, therefore for any $$x \in [V^\circ_i\left(EX\left(i,j\right)\right), V^*_i\left(EX\left(i,j\right)\right)]$$ and $$y \in [V^\circ_j\left(EX\left(i,j\right)\right), V^*_j\left(EX\left(i,j\right)\right)]$$ the upper-level sets $U_i\left(x\right)$ and $U_j\left(y\right)$ are strictly convex and compact.
For claim 1, assume there are at least two pareto-efficient exchanges $\left(m^{'}_{i},m^{'}_{j}\right)$ and $\left(m^{''}_{i},m^{''}_{j}\right)$ that satisfy \ref{eqn:char-pareto-efficient}. Then $V_{ji}\left(m^{'}_{i},m^{'}_{j}\right) = V_{ji}\left(m^{''}_{i},m^{''}_{j}\right)$ and $V_{ij}\left(m^{'}_{i},m^{'}_{j}\right) \geq q_i$, $V_{ji}\left(m^{''}_{i},m^{''}_{j}\right) \geq q_i$. Define $\left(m^{'''}_{i},m^{'''}_{j}\right) = 0.5\left(m^{'}_{i},m^{'}_{j}\right) + 0.5\left(m^{''}_{i},m^{''}_{j}\right)$. Then $\left(m^{'''}_{i},m^{'''}_{j}\right)$ belongs in both the upper-level sets and:
\begin{align*}
&V_{ji}\left(m^{'''}_{i},m^{'''}_{j}\right) > V_{ji}\left(m^{'}_{i},m^{'}_{j}\right)\\
&V_{ji}\left(m^{'''}_{i},m^{'''}_{j}\right) > V_{ji}\left(m^{''}_{i},m^{''}_{j}\right)\\
&V_{ij}\left(m^{'''}_{i},m^{'''}_{j}\right) > q_i
\end{align*}
This contradicts the assumption that both $\left(m^{'}_{i},m^{'}_{j}\right)$ and $\left(m^{''}_{i},m^{''}_{j}\right)$ satisfy \ref{eqn:char-pareto-efficient}. Therefore claim 1 holds.

For claim 2 assume $V_{ij}\left(m^{p_i}_{i}\left(q_{i}\right),m^{p_i}_{j}\left(q_{i}\right)\right) > q_{i}$. There are three possible cases as discussed next.

Case 1: If $m^{p_i}_j\left(q_i\right) > 0$. Then consider the exchange, $\left(m^{p_i}_{i}\left(q_{i}\right),0\right)$. From the intermediate value theorem, there exists $\lambda \in (0,1)$, such that
$$q_i < V_{ij}\left(m^{p_i}_{i}\left(q_{i}\right),\lambda m^{p_i}_{j}\left(q_{i}\right)\right) < V_{ij}\left(m^{p_i}_{i}\left(q_{i}\right),m^{p_i}_{j}\left(q_{i}\right)\right).$$
Therefore $\left(m^{p_i}_{i}\left(q_{i}\right),\lambda m^{p_i}_{j}\left(q_{i}\right)\right) \in U_i\left(q_i\right)$. From the strong monotonicity of $V_{ji}$,\\
$$V_{ji}\left(m^{p_i}_{i}\left(q_{i}\right),\lambda m^{p_i}_{j}\left(q_{i}\right)\right) > V_{ji}\left(m^{p_i}_{i}\left(q_{i}\right),m^{p_i}_{j}\left(q_{i}\right)\right).$$
This contradicts the assumption that $\left(m^{p_i}_{i}\left(q_{i}\right),m^{p_i}_{j}\left(q_{i}\right)\right)$ satisfies \ref{eqn:char-pareto-efficient}.

Case 2: If $m^{p_i}_j\left(q_i\right) = 0$ but $m^{p_i}_i\left(q_i\right) < \min\{|M_i|, W\left(i,j\right)\}$. Then consider the exchange, $\left(m_i,m_j\right) = \left(\min\{|M_i|, W\left(i,j\right)\},m^{p_i}_j\left(q_i\right)\right)$. From the intermediate value theorem, there exists $\lambda \in (0,1)$, such that
$$q_i < V_{ij}\left(\left(1-\lambda\right) m^{p_i}_{i}\left(q_{i}\right) + \lambda m_i, m^{p_i}_{j}\left(q_{i}\right)\right) < V_{ij}\left(m^{p_i}_{i}\left(q_{i}\right),m^{p_i}_{j}\left(q_{i}\right)\right).$$
Therefore $\left(\left(1-\lambda\right) m^{p_i}_{i}\left(q_{i}\right) + \lambda m_i, m^{p_i}_{j}\left(q_{i}\right)\right) \in U_i\left(q_i\right)$. From the strong monotonicity of $V_{ji}$,
$$V_{ji}\left(\left(1-\lambda\right) m^{p_i}_{i}\left(q_{i}\right) + \lambda m_i, m^{p_i}_{j}\left(q_{i}\right)\right) > V_{ji}\left(m^{p_i}_{i}\left(q_{i}\right),m^{p_i}_{j}\left(q_{i}\right)\right).$$
This contradicts the assumption that $\left(m^{p_i}_{i}\left(q_{i}\right),m^{p_i}_{j}\left(q_{i}\right)\right)$ satisfies \ref{eqn:char-pareto-efficient}. In either of the two cases, by contradiction, $V_{ij}\left(m^{p_i}_{i}\left(q_{i}\right),m^{p_i}_{j}\left(q_{i}\right)\right) = q_{i}$.

Case 3: If $m^{p_i}_j\left(q_i\right) = 0$ but $m^{p_i}_i\left(q_i\right) = \min\{|M_i|, W\left(i,j\right)\}$, then $V_{ij}\left(m^{p_i}_{i}\left(q_{i}\right),m^{p_i}_{j}\left(q_{i}\right)\right) = V^\circ_i\left(EX\left(i,j\right)\right) \leq q_{i}$ which contradicts the assumption. Therefore by contradiction,
$$V_{ij}\left(m^{p_i}_{i}\left(q_{i}\right),m^{p_i}_{j}\left(q_{i}\right)\right) = q_{i}.$$

For claim 3 pick any $q^{'}_i,q^{''}_i \in [V^\circ_i\left(EX\left(i,j\right)\right), V^*_i\left(EX\left(i,j\right)\right)]$ with $q^{'}_i>q^{''}_i$. Then since $\left(m^{p_i}_{i}\left(q^{''}_{i}\right),m^{p_i}_{j}\left(q^{''}_{i}\right)\right)$ is pareto-efficient, therefore
$$V_{ji}\left(m^{p_i}_{i}\left(q^{''}_{i}\right),m^{p_i}_{j}\left(q^{''}_{i}\right)\right) > V_{ji}\left(m^{p_i}_{i}\left(q^{'}_{i}\right),m^{p_i}_{j}\left(q^{'}_{i}\right)\right).$$
Thus claim 3 holds.

\end{proof}

For a pair of neighboring actors $\left(i,j\right)$ in the network, we define the functions
\begin{align}
&V^p_{ji} ~:~ [V^\circ_i\left(EX\left(i,j\right)\right), V^*_i\left(EX\left(i,j\right)\right)] \rightarrow [V^\circ_j\left(EX\left(i,j\right)\right), V^*_j\left(EX\left(i,j\right)\right)]\\
&V^p_{ij} ~:~ [V^\circ_j\left(EX\left(i,j\right)\right), V^*_j\left(EX\left(i,j\right)\right)] \rightarrow [V^\circ_i\left(EX\left(i,j\right)\right), V^*_i\left(EX\left(i,j\right)\right)]\\
&\mbox{where, } V^p_{ji}\left(x\right) = V_{ji}\left(m^{p_i}_{i}\left(x\right),m^{p_i}_{j}\left(x\right)\right) \mbox{ and } V^p_{ij}\left(x\right) = V_{ij}\left(m^{p_j}_{i}\left(x\right),m^{p_j}_{j}\left(x\right)\right).
\end{align}

\begin{proposition}

The functions $V^p_{ji}$ and $V^p_{ij}$ are inverse of each other, strictly decreasing and continuous.

\end{proposition}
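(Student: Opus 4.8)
The plan is to establish the three asserted properties in the order: strict monotonicity, the inverse identity, and finally continuity. Strict monotonicity of $V^p_{ji}$ is immediate from claim 3 of Proposition \ref{prop:uniqueness-monotonicity-pareto}: by definition $V^p_{ji}(x) = V_{ji}\left(m^{p_i}_i(x), m^{p_i}_j(x)\right)$, and claim 3 asserts precisely that this quantity is strictly decreasing in $x$. Interchanging the roles of $i$ and $j$ throughout Proposition \ref{prop:uniqueness-monotonicity-pareto} gives the analogous statements for the $j$-side construction, so $V^p_{ij}$ is strictly decreasing as well.

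Next I would show that $V^p_{ij}$ and $V^p_{ji}$ are mutual inverses. Fix $x \in [V^\circ_i(EX(i,j)), V^*_i(EX(i,j))]$, write $(a,b) = \left(m^{p_i}_i(x), m^{p_i}_j(x)\right)$ for the unique Pareto-efficient exchange of claim 1, and set $y = V^p_{ji}(x) = V_{ji}(a,b)$. By claim 2 we have $V_{ij}(a,b) = x$. The heart of the argument is to show that $(a,b)$ is also the exchange selected by the $j$-side problem at level $y$, that is, $(a,b) = \left(m^{p_j}_i(y), m^{p_j}_j(y)\right)$. Since $V_{ji}(a,b) = y$, the exchange $(a,b)$ lies in the upper-level set $U_j(y)$. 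If it failed to maximize $V_{ij}$ over $U_j(y)$, there would exist $(a',b') \in U_j(y)$ with $V_{ij}(a',b') > V_{ij}(a,b)$ and $V_{ji}(a',b') \geq y = V_{ji}(a,b)$; but then $(a',b')$ would Pareto-dominate $(a,b)$, contradicting the Pareto-efficiency of $(a,b)$. Hence $(a,b)$ maximizes $V_{ij}$ over $U_j(y)$, and by the uniqueness in claim 1 applied with $i$ and $j$ interchanged it equals $\left(m^{p_j}_i(y), m^{p_j}_j(y)\right)$. Therefore $V^p_{ij}(y) = V_{ij}(a,b) = x$, i.e. $V^p_{ij}\left(V^p_{ji}(x)\right) = x$. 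The symmetric argument yields $V^p_{ji}\left(V^p_{ij}(y)\right) = y$ for every $y$ in the domain of $V^p_{ij}$, so the two functions are inverse to one another.

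It remains to deduce continuity. The inverse identity just established shows that $V^p_{ji}$ is a surjection onto $[V^\circ_j(EX(i,j)), V^*_j(EX(i,j))]$, since every $y$ in that interval equals $V^p_{ji}\left(V^p_{ij}(y)\right)$; combined with strict monotonicity it is therefore a strictly decreasing bijection between two compact intervals. I would finish with the standard fact that a strictly monotone surjection between intervals is continuous: a monotone function has one-sided limits everywhere, and any jump discontinuity would leave an open subinterval of the codomain unattained, contradicting surjectivity onto an interval. This gives continuity of $V^p_{ji}$, and the same reasoning (equivalently, the observation that the inverse of a continuous strictly monotone bijection on an interval is again continuous) gives continuity of $V^p_{ij}$.

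I expect the inverse-identity step of the second paragraph to be the main obstacle: one must argue that the single Pareto frontier admits two mutually consistent parameterizations, which rests on reading the definition of Pareto-efficiency correctly and on the exactness conclusion $V_{ij}(a,b)=x$ from claim 2. By contrast, the continuity step is essentially a one-line consequence of monotonicity together with surjectivity, provided one resists the temptation to prove it directly by an $\epsilon$--$\delta$ estimate or a maximum-theorem argument over the varying constraint correspondence $x \mapsto U_i(x)$, both of which would be considerably more delicate.
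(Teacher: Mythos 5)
Your proof is correct and follows essentially the same route as the paper: strict monotonicity from Proposition \ref{prop:uniqueness-monotonicity-pareto}, the inverse identity from the uniqueness of the Pareto-efficient exchange at each payoff level, and continuity from the fact that a strictly monotone surjection between intervals cannot have a jump. If anything, your second paragraph spells out the inverse-identity step (that the $i$-side and $j$-side optimizations select the same exchange) more carefully than the paper's one-line appeal to uniqueness, and your surjectivity-based continuity argument is a cleaner packaging of the paper's preimage-of-closed-intervals reasoning.
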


\begin{proof}

Firstly using proposition \ref{prop:uniqueness-monotonicity-pareto}, we notice that $V^p_{ji}\left(x\right)$ is uniquely defined for each $x$ in its domain and $V^p_{ji}$ is strictly decreasing. Similarly, $V^p_{ij}\left(x\right)$ is uniquely defined for each $x$ in its domain and $V^p_{ij}$ is strictly decreasing.

Pick $x \in [V^\circ_i\left(EX\left(i,j\right)\right), V^*_i\left(EX\left(i,j\right)\right)]$ and $y = V^p_{ji}\left(x\right)$. Since $\left(x,y\right)$ is the unique pareto-efficient payoff with $j$'s payoff as $y$, therefore $V^p_{ij}\left(y\right) = x = {V^{p^{-1}}_{ji}\left(V^p_{ji}\left(x\right)\right)}$. Also, since $V^p_{ij}$ is uniquely defined over its domain, therefore the range of $V^p_{ji}$ is  $[V^\circ_j\left(EX\left(i,j\right)\right), V^*_j\left(EX\left(i,j\right)\right)]$. Similarly, the range of $V^p_{ij}$ is  $[V^\circ_i\left(EX\left(i,j\right)\right), V^*_i\left(EX\left(i,j\right)\right)]$. Therefore, $V^p_{ji}$ and $V^p_{ij}$ are inverse of each other.

For the continuity, consider any interval $\left[V^\circ_j\left(EX\left(i,j\right)\right),y\right] \subseteq [V^\circ_j\left(EX\left(i,j\right)\right), V^*_j\left(EX\left(i,j\right)\right)]$.
Define $x = V^{p^{-1}}_{ji}\left(y\right) = V^p_{ij}\left(y\right)$.
Then since $V^p_{ji}$ and $V^p_{ij}$ are strictly decreasing,
\begin{align*}
&\forall z \in \left[V^\circ_j\left(EX\left(i,j\right)\right),y\right], V^{p^{-1}}_{ji}\left(z\right) = V^{p}_{ij}\left(z\right) \in  [x, V^*_i\left(EX\left(i,j\right)\right)]\\
&\mbox{and } \forall z \in \left[x, V^*_i\left(EX\left(i,j\right)\right)\right], V^{p}_{ji}\left(z\right) \in  [V^\circ_j\left(EX\left(i,j\right)\right),y]
\end{align*}
Since, $y$ were arbitrarily picked, therefore the pre-image of all closed left intervals in\\
$[V^\circ_j\left(EX\left(i,j\right)\right), V^*_j\left(EX\left(i,j\right)\right)]$ are closed. Therefore, $V^p_{ji}$ is continuous. By similar reasoning, $V^p_{ij}$ is continuous.

\end{proof}


Since $V^p_{ji}$ and $V^p_{ij}$ are continuous, we can create a path ${PQ}^{i,j} ~:~ [0,1] \rightarrow XV\left(i,j\right)$ such that all pareto-efficient payoff vectors are in the range of the path as follows.
\begin{align}
{PQ}^{i,j}\left(x\right) = \left(\left(1-x\right) V^\circ_i\left(EX\left(i,j\right)\right) + x V^*_i\left(EX\left(i,j\right)\right), V^p_{ji}\left(\left(1-x\right) V^\circ_i\left(EX\left(i,j\right)\right) + x V^*_i\left(EX\left(i,j\right)\right)\right)\right)
\end{align}
The range ${PQ}^{i,j}([0,1])$ of  ${PQ}^{i,j}$ is the set of all pareto-efficient payoff vectors.

Similarly, since $V^p_{ji}$ maps points in $[V^\circ_i\left(RX\left(i,j\right)\right), V^*_i\left(RX\left(i,j\right)\right)]$ in $[V^\circ_j\left(RX\left(i,j\right)\right), V^*_j\left(RX\left(i,j\right)\right)]$ and $V^p_{ij}$ maps points in $[V^\circ_j\left(RX\left(i,j\right)\right), V^*_j\left(RX\left(i,j\right)\right)]$ in $[V^\circ_i\left(RX\left(i,j\right)\right), V^*_i\left(RX\left(i,j\right)\right)]$, therefore the set of rationally feasible pareto-efficient payoff vectors is a path-connected subset of the set of pareto-efficient payoff vectors. Therefore we can create a path ${RQ}_{i,j} ~:~ [0,1] \rightarrow RV\left(i,j\right)$ such that all rationally feasible pareto-efficient payoff vectors are in the range of the path as follows.
\begin{align}
{RQ}^{i,j}\left(x\right) = \left(\left(1-x\right) V^\circ_i\left(RX\left(i,j\right)\right) + x V^*_i\left(RX\left(i,j\right)\right), V^p_{ji}\left(\left(1-x\right) V^\circ_i\left(RX\left(i,j\right)\right) + x V^*_i\left(RX\left(i,j\right)\right)\right)\right)
\end{align}
The range ${RQ}^{i,j}([0,1])$ of  ${RQ}^{i,j}$ is the set of all rationally feasible pareto-efficient payoff vectors. We will refer to the elements of ${RQ}^{i,j}\left(x\right)$ as ${RQ}^{i,j}_i\left(x\right)$ and ${RQ}^{i,j}_j\left(x\right)$ respectively. Thus  ${RQ}^{i,j}_i$ is strictly increasing and since $V^p_{ji}$ is strictly decreasing, therefore ${RQ}^{i,j}_j$ is strictly decreasing.


\subsection{Exchange network}

Two exchange opportunities $\left(i,j\right)$ and $\left(i,k\right)$ are connected to the degree that exchange in one opportunity is contingent on exchange (or nonexchange) in the other opportunity. (a) The connection is positive if exchange in one opportunity is contingent on exchange in the other, (b) The connection is negative if exchange in one opportunity is contingent on nonexchange in the other \cite{Cook-Emerson:Equity}. In this paper, we restrict our attention to negative connections. Thus each actor can enter into an exchange with at most one of its neighbors.

An {\bf exchange network} is a graph $G=(N,EN)$ where $EN\subseteq E$ is the set of exchange opportunities in the social network along which the exchanges happen. We call these active exchange opportunities in the exchange network $G$. For simplicity, we will refer to $EN$ as the exchange network when $N$ will be understood.


\section{Network Exchange Game}

We now introduce the network exchange game. The network exchange game forms the fundamental structure of analysis in our work on exchange networks. In a negatively connected network, the actors pick at most one neighbor to perform an exchange with and also pick the terms of exchange. An exchange happens if two actors sharing an exchange opportunity pick each other and terms of exchange picked by each is agreeable to the other. An actor obtains a positive payoff only if she performs an exchange.

\label{sec:Network-Exchange-Game}
On the social network $S$, with the capacities of the exchange opportunities $W$, for any actor $i\in N$ with $\left|M_{i}\right|$ units of the item $M_{i}\in M$, let $Nbr\left(i\right)=\left\{ j\in N:\left(i,j\right)\in E\right\} $ be the set of neighbors of actor $i$, determined by the social network $S$.

A {\bf pure strategy} for an actor $i\in N$ is $p_{i}=\left(j,m_{i},m_{j}\right)$ i.e. choose $j\in Nbr\left(i\right)$ to perform an exchange and choose to give $m_{i}\leq\left|M_{i}\right|$ units of $M_{i}$ to $j$ and ask for $m_{j}\leq\left|M_{j}\right|$ of $M_{j}$ from $j$ such that $m_{i}+m_{j}\leq W\left(i,j\right)$. The set of pure strategies for an actor $i$ is $P_{i}\subseteq Nbr(i)\times\mathbf{R}_{+}^{2}$. A {\bf pure strategy profile} is the tuple $p=\left(p_{1},p_{2},...,p_{\left|N\right|}\right)$. The set of pure strategy profiles is $P=\overset{\left|N\right|}{\underset{i=1}{\prod}}P_{i}$. If for any exchange opportunity $(i,j)$, $RX\left(i,j\right) = \phi$ or $RV(i,j) = \{0\}$, then we can remove the exchange opportunity from the network without affecting the payoffs for any outcomes. Therefore, without loss of generality, we assume that for each exchange opportunity $(i,j) \in E$, $RX\left(i,j\right)$ is non-empty and there is at least one positive rationally feasible payoff vector $(v_i,v_j) > 0$ in $RV(i,j)$.

For a pure strategy profile $p\in P$, an exchange opportunity $\left(i,j\right)\in E$ is used for exchange between the actors $i$ and $j$ iff, $p_{i}=\left(j,m_{i,}m_{j}\right)$ and $p_{j}=\left(i,m_{j},m_{i}\right)$, for some $m_{i}\leq\left|M_{i}\right|$, $m_{j}\leq\left|M_{j}\right|$, $m_{i}+m_{j}\leq W\left(i,j\right)$. We then say that the exchange opportunity $\left(i,j\right)\in EN$ is in the exchange network.

For a pure strategy profile $p\in P$, the payoff of the actor $i$ is positive only if $\exists j\in Nbr\left(i\right)$, s.t. and $p_{i}=\left(j,m_{i,}m_{j}\right)$ and $p_{j}=\left(i,m_{j},m_{i}\right)$, for some $m_{i}\leq\left|M_{i}\right|$, $m_{j}\leq\left|M_{j}\right|$, $m_{i}+m_{j}\leq W\left(i,j\right)$; the payoff of $i$ is then $u_{i}\left(p\right)=V_{ij}\left(m_{i},m_{j}\right)$.

Thus a strategy profile $p$ induces a unique exchange network $EN\left(p\right)$, where $EN\left(p\right)=\left\{ \left(i,j\right)\in E:p_{i}=\left(j,m_{i,}m_{j}\right),p_{j}=\left(i,m_{j},m_{i}\right)\right\} $. In the cannonical normal form, the game can be represented by the tuple $\left\langle N,P,U\right\rangle$, where $N$ is the set of selfish actors, $P$ is the set of pure strategy profiles, and $U$ is the payoff function $U:P\rightarrow\mathbf{R}^{\left|N\right|}$.

The social network provides a constraint on the strategies of the actors and the exchange opportunities provide a constraint on the payoffs derived by the actors. An actor's pure strategy determines a neighbor she picks to exchange with and the terms of exchange she proposes. If the proposal is mutually agreed upon, a contract or an exchange relationship is formed. In this one shot game, such exchange relationships form an exchange network. The outcome of the game determines both an exchange network and the payoffs for the individuals. In the following theorem and the corollaries, we identify some properties of the payoffs in an exchange network and some constraints on the structure of the exchange network.

For a given exchange network, $EN$, the set of strategy profiles $P\left(EN\right)=\left\{ p\in P:EN\left(p\right)=EN\right\} $ will be referred to as the supporting strategies for $EN$. The set of payoff profiles supported by $EN$ is $\mathbf{V}\left(EN\right)=\left\{ \mathbf{V}\in\mathbb{R^{\left|N\right|}}:\exists p\in P\left(EN\right)\mbox{ s.t. }U\left(p\right)=\mathbf{V}\right\} $. We will say that $EN$ supports a payoff profile $\mathbf{V}$.


\section{Solution Concept}

The solution concept we explore is of pair-wise stability following \cite{Network-Formation}. We note that Nash equilibrium is not very informative in the exchange network, since formation of any exchange relationship needs actions by two actors, hence, any change in the action of one player cannot lead to formation of a new link and cannot increase her payoff. Hence, a different notion of equilibrium based upon pair wise stability is needed. The following defines the notion of stability in an exchange network.

A strategy profile is pair-wise stable if, no actor can increase her payoff by changing her strategy (changing the neighbor to propose to or proposing different terms of exchange) and no pair of actors having an exchange opportunity described by the graph $S$, can jointly change their strategies (to perform an exchange) such that at least one of the actors increases her payoff and the other actor does not decrease her payoff.

\begin{defn}
Formally, a strategy profile, $p$, is pair-wise stable if $u_{i}\left(p\right)\geq u_{i}\left(p_{i}',p_{-i}\right)$ $\forall i\in N,p_{i}'\in P_{i}$ and $u_{i}\left(p_{i}',p_{j}',p_{-\left\{ i,j\right\} }\right)>u_{i}\left(p\right)\Rightarrow u_{j}\left(p_{i}',p_{j}',p_{-\left\{ i,j\right\} }\right)<u_{j}\left(p\right)$ $\forall\left(i,j\right)\in E,p_{i}'\in P_{i},p_{j}'\in P_{j}$.
An exchange network $EN$ is pair-wise stable if there exists a supporting strategy profile, $p$, for $EN$ such that $p$ is pair-wise stable.
A payoff profile $\mathbf{V}$ is pair-wise stable if there exists a pair-wise stable strategy profile $p$ with $U\left(p\right)=\mathbf{V}$.
\end{defn}

We now provide an equivalent characterization of the pair-wise stable strategy profile that will be used to show the existence of a pair-wise stable strategy profile.
\begin{Lemma} \label{lem:pareto-dominated}

If a strategy profile $p^*$ in the network exchange game is not pair-wise stable then at least one of the conditions holds:

\begin{enumerate}

\item there exists and actor $i$ with strategy $p^*_i = (j, m_i, m_j)$, such that $V_{ij}(m_i,m_j) < 0$.

\item there exists an exchanges opportunity, $(i,j) \in E$, and rationally feasible pareto-efficient payoff vector $(v_i, v_j) \in {RQ}^{i,j}([0,1])$, such that $v_i \geq u_i(p^*)$ and $v_j \geq u_j(p^*)$ with at least one strict inequality.

\end{enumerate}

\end{Lemma}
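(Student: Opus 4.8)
The plan is to assume $p^*$ is not pairwise stable and, in each way this can happen, exhibit a witness for condition~1 or condition~2. First I would isolate a \emph{domination} fact used throughout: every rationally feasible payoff vector is weakly dominated coordinatewise by an element of $RQ^{i,j}([0,1])$. Indeed, rational feasibility forces both coordinates to be nonnegative, so for $(a,b)\in RV(i,j)$ the set $\{(x,y)\in XV(i,j):x\ge a,\ y\ge b\}$ is nonempty and, being a closed subset of the compact set $XV(i,j)$, compact; a maximizer of $x+y$ over it is a pareto-efficient point $(v_i,v_j)$ with $v_i\ge a\ge 0$ and $v_j\ge b\ge 0$, hence rationally feasible and therefore in $RQ^{i,j}([0,1])$.

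Next I would negate the definition: $p^*$ is not pairwise stable iff either (a) some $i$ has $u_i(p_i',p^*_{-i})>u_i(p^*)$ for some $p_i'$, or (b) some $(i,j)\in E$ has $u_i(p_i',p_j',p^*_{-\{i,j\}})>u_i(p^*)$ and $u_j(p_i',p_j',p^*_{-\{i,j\}})\ge u_j(p^*)$. The structural remark driving the argument is that an unmatched actor earns $0$, so a deviation that strictly raises $i$'s payoff above a nonnegative level must leave $i$ matched with some partner $k$ at terms $(m_i',m_j')$ with $V_{ik}(m_i',m_j')>u_i(p^*)$. When $k$ is not $i$'s co-deviator its strategy is unchanged, so the reciprocal offer $p^*_k=(i,m_j',m_i')$ was already present in $p^*$; since $i$ was not matched to $k$ at these terms in $p^*$ (else there is no gain) and each actor makes a single proposal, $k$ is unmatched in $p^*$ and $u_k(p^*)=0$.

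With these in hand the cases are quick. If some actor is matched in $p^*$ with negative payoff, its own strategy witnesses condition~1, so assume all realized payoffs are nonnegative. In case~(a), and in case~(b) when $i$'s post-deviation partner $k\neq j$ (so $j$'s own move is irrelevant to $i$'s payoff), the improvement amounts to $i$ exploiting the standing offer $(i,m_j',m_i')$ of an unmatched $k$: if that offer would give $k$ a negative payoff, $V_{ki}(m_j',m_i')<0$, then $k$'s strategy witnesses condition~1; otherwise $(m_i',m_j')$ is rationally feasible, its payoff vector dominates $(u_i(p^*),0)$ with strict first coordinate, and the domination fact supplies a point of $RQ^{i,k}([0,1])$ above it, giving condition~2. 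In case~(b) when the deviation matches $i$ with $j$, both deviation payoffs are nonnegative, so $(m_i',m_j')$ is rationally feasible and its payoff vector weakly dominates $(u_i(p^*),u_j(p^*))$ with strict first coordinate; the domination fact again yields a point of $RQ^{i,j}([0,1])$, establishing condition~2.

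The step I expect to be the main obstacle is the correct reading of condition~1: it concerns an actor's \emph{proposed} terms being self-harmful rather than its realized payoff. The instability produced when one actor profitably accepts another's over-generous standing offer must be charged to the offerer through condition~1; this is exactly what rescues the lemma in the scenario where $i$ improves by matching a partner $k$ for whom no rationally feasible exchange can match $i$'s gain, so that condition~2 is unavailable for the pair $(i,k)$. The remaining care is bookkeeping---showing a strictly improving deviation always leaves the improving actor matched, pinning down the reciprocal offer, and separating the ``matched to the co-deviator'' subcase from the ``exploits a third party's offer'' subcase---while the domination argument and the reduction of the pairwise case to the unilateral one are routine.
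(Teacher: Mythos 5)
Your proof is correct and follows essentially the same route as the paper's: negate pairwise stability, observe that a profitable unilateral deviation must exploit the standing offer of a neighbor who is unmatched in $p^*$ (hence earning $0$), and in each deviation case show the deviating exchange is rationally feasible and lift its payoff vector to a dominating point of ${RQ}^{i,j}([0,1])$. The only differences are cosmetic: you produce the dominating pareto-efficient point by maximizing $x+y$ over the dominating region rather than via the paper's explicit parametrization $V^p_{ji}$ and the endpoint ${RQ}^{i,j}(1)$, and you are somewhat more careful than the paper about where condition~1 is actually needed and about joint deviations in which $i$'s new partner is a third party.
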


\begin{proof}

Assume $p^*$ is not pair-wise stable and that each actor $i \in N$, has a strategy $p^*_i = (j, m_i, m_j)$, such that $V_{ij}(m_i,m_j) \geq 0$. First we note that if an actor $i$ can unilaterally deviate to a strategy $p^{'}_i = (j, m_i, m_j)$ and increase her payoff then $p^*_j = (i,m_i,m_j)$. Therefore, $u_j(p^*) = 0$. Since $RV(i,j)$  has at least one non-zero rationally feasible pareto-efficient payoff vector therefore there exists a rationally feasible pareto-efficient payoff vector $(v_i, v_j) = {RQ}^{i,j}(1)$ such that $v_i = V^*_{i}(RX(i,j)) \geq V_{ij}(m_i,m_j) = u_i(p^{'}_i, p^*_{-1})$ and $v_j = 0$. Therefore the second condition is satisfied.

Secondly, if two actors $i,j$ with $(i,j) \in E$ can deviate simultaneously their strategies to $p^{'}_i = (j, m_i, m_j)$ and $p^{'}_j = (j, m_i, m_j)$ to increase their payoffs. Without loss of generality assume $V_{ij}(m_i,m_j) > u_i(p^*)$ and $V_{ji}(m_i,m_j) \geq u_j$. Then $V^p_{ji}(V_{ij}(m_i,m_j)) \geq V_{ji}(m_i,m_j) \geq u_j(p^*)$ and $(V_{ij}(m_i,m_j), V^p_{ji}(V_{ij}(m_i,m_j))) \in {RQ}^{i,j}([0,1])$.


\end{proof}

The characterization of pair-wise stable strategy profile in lemma \ref{lem:pareto-dominated} will be used in the next section to prove the existence of a pair-wise stable strategy profile.

\section{Existence of Pair-wise Stable Strategy Profiles and Pair-wise Stable Exchange Networks}

To show the existence of a pair-wise stable strategy profile, we reduce the problem to a similar problem in \cite{Generalized-Stable-Matching}. In \cite{Generalized-Stable-Matching}, we showed the existence of a weighted stable matching. We will show that there is a corresponding pair-wise stable strategy profile. We now state the existence theorem.

\begin{Theorem}

For any network exchange game $\left\langle N,P,U\right\rangle$ as introduced in section \ref{sec:Network-Exchange-Game}, there exists a pair-wise-stable strategy profile $p^*$.

\end{Theorem}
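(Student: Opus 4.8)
The plan is to reduce the existence of a pair-wise stable strategy profile to the existence of a generalized stable matching established in \cite{Generalized-Stable-Matching}. The key conceptual bridge is that each exchange opportunity $(i,j) \in E$ should be treated as a potential matching edge, and the continuum of rationally feasible pareto-efficient payoff vectors along the path ${RQ}^{i,j}([0,1])$ should play the role of the "terms" of the match. Lemma \ref{lem:pareto-dominated} is precisely the tool that lets us restrict attention to these pareto-efficient payoffs: it tells us that in any profile that is \emph{not} pair-wise stable, either some actor is playing an irrational exchange (payoff $<0$), or there is an edge supporting a rationally feasible pareto-efficient payoff vector that weakly dominates the current payoffs of both endpoints with one strict inequality. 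So a profile in which every played exchange is rational and no edge offers such a dominating pareto-efficient vector must be pair-wise stable.

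First I would set up the correspondence between the network exchange game and the generalized matching problem. For each edge $(i,j)$, I would use the path ${RQ}^{i,j}:[0,1]\to RV(i,j)$ together with the fact (proven earlier) that ${RQ}^{i,j}_i$ is strictly increasing and ${RQ}^{i,j}_j$ is strictly decreasing and continuous. This gives a continuous, strictly monotone trade-off curve describing exactly which pairs of payoffs $(v_i,v_j)$ the two actors can simultaneously achieve on the pareto frontier. I would then encode each actor's preference over "(partner, share)" outcomes by the payoff she receives, and verify that the resulting structure satisfies the hypotheses of the generalized stable matching theorem in \cite{Generalized-Stable-Matching} — in particular that the payoff sets are compact and connected (Proposition 1) and that the frontier functions $V^p_{ji}, V^p_{ij}$ are continuous mutual inverses (Proposition 3), so the "preference profiles" vary continuously and the set of achievable payoff pairs on each edge is a continuous one-parameter family.

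Next I would invoke the existence result to obtain a generalized stable matching, which assigns to each actor either a partner together with a pareto-efficient rationally feasible payoff vector, or the no-match outcome. From this matching I would construct a candidate strategy profile $p^*$: for each matched pair $(i,j)$ realizing payoff vector $(v_i,v_j)\in {RQ}^{i,j}([0,1])$, I would pick the (unique, by Proposition \ref{prop:uniqueness-monotonicity-pareto}) exchange $(m_i,m_j)$ with $\mathbf{V}_{i,j}(m_i,m_j)=(v_i,v_j)$ and set $p^*_i=(j,m_i,m_j)$ and $p^*_j=(i,m_j,m_i)$; unmatched actors are assigned any rational exchange giving payoff $0$. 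I would then verify that $p^*$ is a valid strategy profile that induces the matched exchange network.

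Finally I would prove $p^*$ is pair-wise stable by contradiction using Lemma \ref{lem:pareto-dominated}. If $p^*$ were not pair-wise stable, then either condition (1) fails — impossible, since by construction every played exchange yields payoff $\geq 0$ — or condition (2) holds, giving an edge $(i,j)$ and a pareto-efficient vector $(v_i,v_j)\in {RQ}^{i,j}([0,1])$ weakly dominating $(u_i(p^*),u_j(p^*))$ with one strict inequality. Such a pair would constitute a \emph{blocking pair} for the generalized matching, contradicting its stability. \textbf{The main obstacle} I expect is the reduction step itself: precisely matching the structure of the network exchange game (a hybrid strategy space combining discrete partner choice with a continuum of exchange terms) to the exact hypotheses of the generalized stable matching framework in \cite{Generalized-Stable-Matching}, and in particular translating the notion of a blocking pair there into condition (2) of Lemma \ref{lem:pareto-dominated} so that stability transfers cleanly in both directions. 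The continuity and strict monotonicity of the ${RQ}^{i,j}$ paths should make each actor's induced preferences well-behaved enough to satisfy those hypotheses, but checking this alignment carefully is where the real work lies.
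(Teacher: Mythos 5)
Your proposal follows essentially the same route as the paper: both reduce the problem to the generalized stable matching result of \cite{Generalized-Stable-Matching} by using the continuity and strict monotonicity of ${RQ}^{i,j}_i$ and ${RQ}^{i,j}_j$ as the trade-off curves on each edge, construct $p^*$ from the matching via the unique pareto-efficient exchange of Proposition \ref{prop:uniqueness-monotonicity-pareto}, and verify stability by contradiction through Lemma \ref{lem:pareto-dominated}, with a dominating pareto-efficient vector translating into a blocking pair (in the paper, a violation of the stability inequality $\left(RQ^{i,j}_i\right)^{-1}\left(u_i\right) + 1 - \left(RQ^{i,j}_j\right)^{-1}\left(u_j\right) \geq 1$). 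The reduction step you flag as the main obstacle is handled in the paper simply by writing out that explicit system of inequalities, so your plan is sound.
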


\begin{proof}

Arbitrarily pick a network exchanges game $\left\langle N,P,U\right\rangle$ as introduced in section \ref{sec:Network-Exchange-Game} and consider the following system of inequalities:
\begin{align}
&s_i + s_j = 1, \forall \left(i,j\right) \in EN\\
&u_i = RQ^{i,j}_i\left(s_i\right), u_j = RQ^{i,j}_j\left(1-s_j\right), \forall \left(i,j\right) \in EN\\
&\left(RQ^{i,j}_i\right)^{-1}\left(u_i\right) + 1 - \left(RQ^{i,j}_j\right)^{-1}\left(u_j\right) \geq 1, \forall \left(i,j\right) \in E\\
&s_{i} \geq 0, \forall i \in N
\end{align}
In \cite{Generalized-Stable-Matching}, we showed the existence of feasible solutions to the above system of inequalities when $RQ^{i,j}_i$ is continuous and strictly increasing and $RQ^{i,j}_j$ continuous and is strictly decreasing. Pick a feasible solution $(\mathbf{s}^*, EN^*)$ for the system and create a strategy profile for the network exchange game as follows:

\begin{itemize}

\item For each $(i,j) \in EN^*$, set $p^*_i = (j, m^{p_i}_i(RQ^{i,j}_i\left(s_i\right)), m^{p_i}_j(RQ^{i,j}_i\left(s_i\right)))$ and
$$p^*_j = (i, m^{p_i}_i(RQ^{i,j}_i\left(s_i\right)), m^{p_i}_j(RQ^{i,j}_i\left(s_i\right)))$$
where $(m^{p_i}_i(RQ^{i,j}_i\left(s_i\right)), m^{p_i}_j(RQ^{i,j}_i\left(s_i\right)))$ is the unique exchange satisfying \ref{eqn:char-pareto-efficient}.

\item For each unmatched $i \in N$, arbitrarily pick a neighbor $j \in Nbr(i)$ and set $p^*_i = (j, 0, 0)$.

\end{itemize}

We now show that $p^*$ is pair-wise stable. Assume $p^*$ is not pair-wise stable. First we note that for all $i \in N$, the strategy $p^*_i = (j, m_i, m_j)$ is such that $V_{ij}(m_i,m_j) \geq 0$. Then by lemma \ref{lem:pareto-dominated} there exits an exchange opportunity $(i,j) \in E$ and a rationally feasible pareto-efficient payoff vector $(v_i, v_j) \in RQ^{i,j}([0,1])$ such that $v_i \geq u_i(p^*)$ and $v_j \geq u_j(p^*)$ with at least one strict inequality. Pick any such exchange opportunity $(i,j)$. Clearly, $(i,j) \notin EN$ because by definition $(u_i(p^*), u_j(p^*))$ is a pareto-efficient payoff vector for the exchange opportunity $(i,j)$. Without loss of generality, assume $v_i \geq u_i(p^*)$. Define $\lambda_1 = \left(RQ^{i,j}_i\right)^{-1}\left(u_i(p^*)\right)$ and $\lambda_2 = \left(RQ^{i,j}_i\right)^{-1}\left(v_i\right)$. Then $\lambda_2 > \lambda_1$ and from the definition 12, $u_j(p^*) \geq RQ^{i,j}_j(\lambda_1)$. Therefore $u_j(p^*) \geq RQ^{i,j}_j(\lambda_1) > RQ^{i,j}_j(\lambda_2) = v_j$ which contradicts the assumption that $v_j \geq u_j(p^*)$. Since $(i,j)$ was arbitrarily picked, therefore there does not exist an exchange opportunity $(i,j) \in E$ and a rationally feasible pareto-efficient payoff vector $(v_i, v_j) \in RQ^{i,j}([0,1])$ such that $v_i \geq u_i(p^*)$ and $v_j \geq u_j(p^*)$ with at least one strict inequality. Therefore by lemma \ref{lem:pareto-dominated}, $p^*$ is pair-wise stable.

\end{proof}

Thus we have proved the existence of a pair-wise stable strategy profile in every network exchange game as introduced in section \ref{sec:Network-Exchange-Game}. The pair-wise stable strategy profile induces a pair-wise stable exchange network and a pair-wise stable payoff profile and hence, every network exchange game as introduced in section \ref{sec:Network-Exchange-Game} has a pair-wise stable stable exchange network and a pair-wise stable payoff profile.


\section{Conclusion}

In this paper we show that when the individuals in a bipartite network exclusively choose partners and exchange valued goods with their partners, then there exists a set of exchanges that are pair-wise stable.


\bibliographystyle{plain}
\bibliography{Ankur-Mani-bibtex}

\end{document}